\documentclass[conference]{IEEEtran}

\usepackage{setspace}


\usepackage{graphics,
           psfrag,
           epsfig,
           amsthm,
           cite,
           amssymb,
           url,
           dsfont,
           subfigure,
           algorithm,
           algorithmic,
           balance,
           enumerate,
           color,
           setspace
}
\usepackage{amsmath}

\usepackage{epstopdf}

\newtheorem{definition}{Definition}

\newtheorem{theorem}{Theorem}

\newcommand{\eref}[1]{(\ref{#1})}
\newcommand{\sref}[1]{Section~\ref{#1}}

\newcommand{\fref}[1]{Figure~\ref{#1}}

\newcommand{\cref}[1]{Constraint~\ref{#1}}



\hyphenation{op-tical net-works semi-conduc-tor}

\newcommand{\ignore}[1]{}


\epsfxsize=3.0in
\pagestyle{plain}
\IEEEoverridecommandlockouts
%


\addtolength{\textfloatsep}{-2mm}
\setlength{\abovedisplayskip}{0.5mm}
\setlength{\belowdisplayskip}{0.5mm}
\setlength{\abovecaptionskip}{0.5mm}
\setlength{\belowcaptionskip}{0.5mm}
\setlength{\floatsep}{1mm}

\begin{document}

\title{Completion Time Reduction in Instantly Decodable Network Coding Through Decoding Delay Control}
\author{
   \authorblockN{Ahmed Douik$^{\dagger}$, Sameh Sorour$^\ast$, Mohamed-Slim Alouini$^\dagger$, and Tareq Y. Al-Naffouri$^{\dagger\ast}$\\}%
   \authorblockA{$^\dagger$King Abdullah University of Science and Technology (KAUST), Kingdom of Saudi Arabia \\
    $^\ast$King Fahd University of Petroleum and Minerals (KFUPM), Kingdom of Saudi Arabia \\
    Email: $^\dagger$\{ahmed.douik,slim.alouini,tareq.alnaffouri\}@kaust.edu.sa \\
    $^\ast$\{samehsorour,naffouri\}@kfupm.edu.sa}
    }

\maketitle

\IEEEoverridecommandlockouts

\begin{abstract}
For several years, the completion time and the decoding delay problems in Instantly Decodable Network Coding (IDNC) were considered separately and were thought to completely act against each other. Recently, some works aimed to balance the effects of these two important IDNC metrics but none of them studied a further optimization of one by controlling the other. In this paper, we study the effect of controlling the decoding delay to reduce the completion time below its currently best known solution. We first derive the decoding-delay-dependent expressions of the users' and their overall completion times. Although using such expressions to find the optimal overall completion time is NP-hard, we use a heuristic that minimizes the probability of increasing the maximum of these decoding-delay-dependent completion time expressions after each transmission through a layered control of their decoding delays. Simulation results show that this new algorithm achieves both a lower mean completion time and mean decoding delay compared to the best known heuristic for completion time reduction. The gap in performance becomes significant for harsh erasure scenarios.
\end{abstract}

\begin{keywords}
Instantly decodable network coding, Minimum completion time, Decoding delay.
\end{keywords}

\section{Introduction} \label{sec:intro}

\emph{Network Coding (NC)} gained much attention in the past decade after its first introduction in the seminal paper \cite{850663}. In the last lustrum, an important subclass of network coding, namely the Instantly Decodable Network Coding (IDNC) has been an intensive subject of research \cite{6655395,ref4,6725590,ref5,ref7,refsameh,ref6,6766433,ref8,ref17,ref18,arg1,6120247,refahmed} thanks to its several benefits, such as the use of simple binary XOR to encode and decode packets. Moreover, it requires no buffer and allow fast progressive decoding of packets, which is much favorable in many applications (e.g. roadside to vehicle safety messages, satellite networks and IPTV) compared to the long buffering time needed in other NC approaches before decoding.

For as long as the research on IDNC has existed, there were two main metrics that were considered in the literature as measures of its quality, namely the \emph{completion time} \cite{ref4} and the \emph{decoding delay} \cite{ref2}. The former measures how fast the sender can complete the delivery and recovery of requested packets whereas the latter measures how far the sender is from being able to serve all the unsatisfied users in each and every transmission. For several years, these two metrics were considered for optimization separately in many works. Though both were proved to be NP-hard parameters to minimize, many heuristics has been developed to solve them in many scenarios \cite{ref4,ref2,ref5,refsameh,refahmed}, but again separately. In fact, it can be easily inferred from \cite{ref4} and \cite{ref2} that the policies derived so far to optimize one usually degrades the other.

It was not until very recently that one work \cite{nada} has aimed to derive a policy that can balance between these two metric and achieve an intermediate performance for both of them. Nonetheless, to the best of our knowledge, There is no work that aims to explore how these two metrics can be controlled together in order to achieve an even better performance than the currently best known solutions. For instance, every time an unsatisfied user receives a coded packet that is not targeting him, its decoding delay increases and so does its individual completion time. Although this fact was noted for erasure-free transmissions in \cite{nada}, it was used to strike a balance in performance between both metrics and not to investigate whether a smart control of such decoding delay effects will further reduce the overall completion time compared to its current best achievable performance.

In this paper, we aim to design a new completion time reduction algorithm through decoding delay control. We first derive more a general expressions of the individual and overall completion times over erasure channels as a function of the users' decoding delays. Since finding the optimal schedule of coded packets to minimize the overall completion time is NP-hard \cite{arg1}, we use a heuristic that aims to minimize the probability of increasing the maximum of these decoding-delay-dependent completion time expressions after each transmission. This process can be done by partitioning the \emph{IDNC graph} into layers with descending order of user completion time criticality before each transmission. The coding combination for this transmission is then designed by going through these descending order layers sequentially and selecting the combination that minimizes the probability of any decoding delay increments within each layer. This is done while maintaining the instant decodability constraint of the overall coding combination for the targeted users in the more critical layer(s). Finally, we compare through simulations the performance of our designed algorithm to the best known completion time and decoding delay reduction algorithms.

The rest of this paper is organized as follows. \sref{sec:sys} introduces the system model and parameters. In \sref{sec:formulation}, we derive the decoding delay dependent completion time expressions and introduce the problem formulation. The algorithm to solve the problem is illustrated in \sref{sec:algo} and is tested through simulation results in \sref{sec:results}. Finally, we conclude this paper in \sref{sec:conclusion}.

\section{System Model and Parameters} \label{sec:sys}

The model, we consider in this paper, consists of a wireless sender that is required to deliver a frame (denoted by $\mathcal{N}$) of $N$ source packets to a set (denoted by $\mathcal{M}$) of $M$ users. Each user is interested in receiving the $N$ packets of $\mathcal{N}$. In an \emph{initial phase}, The sender transmits the $N$ packets of the frame uncoded. Each user listens to all transmitted packets and feeds back to the sender an acknowledgement for each successfully received packet.

After the \emph{initial phase}, two sets of packets are attributed to each user $i$ at the sender:
\begin{itemize}
\item The \emph{Has} set (denoted by $\mathcal{H}_i$) is defined as the set of packets successfully received by user $i$.
\item The \emph{Wants} set (denoted by $\mathcal{W}_i$) is defined as the set of packets that are lost by user $i$. In other words,we have $\mathcal{W}_i = \mathcal{N} \setminus \mathcal{H}_i$.
\end{itemize}

After the \emph{initial phase}, the \emph{recovery phase} begins. In this phase, the sender exploits the diversity of received packets at the different users to transmit network coded combinations of the source packets. After each transmission, users update the sender in case they receive the coded packet and decode a missing source packets from it. This process is repeated until all users complete the reception of all the packets. Let $p_{i}$, $i\in\mathcal{M}$, be the erasure probability of a packet at user $i$, which is assumed to be constant during the frame period.
For ease of notation, we will assume that the time index $t$ denotes the transmission number within the recovery phase and thus $t=0$ refers to its beginning. In the \emph{recovery phase}, the encoded packets at time $t$ can have one of the following three options for each user $i$:
\begin{itemize}
\item \emph{Non-innovative:} A packet is non-innovative for user $i$ if all the source packets combined in it are from $\mathcal{H}_i(t)$.
\item \emph{Instantly Decodable:} A packet is instantly decodable for user $i$ if it contains \emph{only one source packet} from $\mathcal{W}_i(t)$.
\item \emph{Non-Instantly Decodable:} A packet is non instantly decodable for user $i$ if it contains two or more source packets from $\mathcal{W}_i(t)$.
\end{itemize}
We define the targeted users by a coded packet (or a transmission) as the users for which this packet is instantly decodable. Given a schedule $S$ of coded packets transmitted by the sender, we define the individual completion time, overall completion time and the decoding delay, like in \cite{ref2,nada}, as follows:
\begin{definition}
The individual completion time $\mathcal{C}_i(S)$ of user $i$ is the number of recovery transmissions required until this user obtained all its requested packets.
\end{definition}
\begin{definition}
The overall completion time $\mathcal{C}(S)$ of a frame is the number of recovery transmissions required until all users obtain all their requested packets. It easy to infer that $\mathcal{C}(S) = \max_{i\in\mathcal{M}} \mathcal{C}_i(S)$.
\end{definition}
\begin{definition}
At any recovery phase transmission at time $t$, a user $i$, with non-empty Wants set, experiences a one unit increase of decoding delay if it successfully receives a packet that is either non-innovative or non-instantly decodable. Consequently, the decoding delay $D_i(S)$ experienced by user $i$ given a schedule $S$ is the number of received coded packets by $i$ before its individual completion, which are non-innovative or non-instantly decodable.
\end{definition}

The possible coded combinations for the transmission at time $t$ are determined using the IDNC graph $\mathcal{G}(t)$ \cite{ref2}. This graph is constructed by generating a vertex $v_{ij}$ for every packet $j\in\mathcal{W}_i(t)$ and $\forall~i\in\mathcal{M}$. Two vertices $v_{ij}$ and $v_{kl}$ are adjacent in this graph, and thus can be served simultaneously, if $j=l$ (in which case $i$ and $k$ can be served by simply sending $j$) or $j\in\mathcal{H}_k$ and $l\in\mathcal{H}_i$ (in which case $i$ and $k$ can be served by sending $j\oplus l$). It is easy to infer that this simultaneous service property extends to every clique in the graph \cite{arg1}. In other words, all the users identified by the vertices of a clique $\kappa(t)$ can be simultaneously served by combining the packets identified by the same vertices of that clique $\kappa(t)$. In the rest of the paper, we will designate the transmission occurring at time $t$ by the selected clique $\kappa(t)$ for this transmission from the IDNC graph.

\section{Problem Formulation using Decoding-Delay-Dependent Expressions} \label{sec:formulation}

The following theorem introduces a decoding-delay-dependent expression for the individual completion time of user $i$ and the overall completion time, given the transmission of schedule $S$ from the sender over erasure channels.
\begin{theorem}
For a relatively large number of packets $N$, and a schedule $S$ of transmitted packets by the sender until the overall completion time occurs to all users, the individual completion time for user $i$ can be approximated by:
\begin{equation}\label{eq:ICT}
\mathcal{C}_i(S) \approx \frac{\left|\mathcal{W}_i(0)\right| + D_i(S)-p_i}{1-p_i}
\end{equation}
Consequently, the overall completion time for the same schedule $S$ can be expressed as:
\begin{equation}
\mathcal{C}(S) \approx \max_{i\in\mathcal{M}}\left\{\frac{\left|\mathcal{W}_i(0)\right| + D_i(S)-p_i}{1-p_i}\right\}
\end{equation}
\end{theorem}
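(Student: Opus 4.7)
The plan is to do a simple counting argument on the stream of transmissions seen by user $i$ up to its individual completion, then close the equation by replacing the (random) number of erasures with its expectation using a law-of-large-numbers argument for large $N$.

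First I would classify the $\mathcal{C}_i(S)$ transmissions that occur during user $i$'s recovery window into three disjoint classes according to the definitions already introduced in \sref{sec:sys}: (a) transmissions erased at user $i$, whose count I denote $E_i(S)$; (b) transmissions successfully received by $i$ that are non-innovative or non-instantly-decodable, whose count is exactly $D_i(S)$ by the definition of the decoding delay; and (c) transmissions successfully received and innovatively/instantly decoded, which strictly decrease $|\mathcal{W}_i(t)|$ by one and whose count is therefore exactly $|\mathcal{W}_i(0)|$ at completion. These three classes partition the window, so
\begin{equation*}
\mathcal{C}_i(S) = E_i(S) + D_i(S) + |\mathcal{W}_i(0)|.
\end{equation*}

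The next step is the observation that the $\mathcal{C}_i(S)$-th transmission cannot be an erasure nor a delay-causing reception: by definition $\mathcal{C}_i(S)$ is the first transmission at which $\mathcal{W}_i$ becomes empty, which forces it to belong to class (c). Hence all $E_i(S)$ erasures are contained in the first $\mathcal{C}_i(S)-1$ slots. Since the erasure events at user $i$ are i.i.d. Bernoulli$(p_i)$ and independent of the scheduling decisions, for the $N$ (hence $\mathcal{C}_i(S)$) large regime stated in the hypothesis I invoke the law of large numbers to replace $E_i(S)$ by its conditional mean,
\begin{equation*}
E_i(S) \approx p_i\,(\mathcal{C}_i(S)-1).
\end{equation*}
Plugging this into the bookkeeping identity and solving for $\mathcal{C}_i(S)$ yields exactly \eref{eq:ICT}. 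The overall completion time expression then follows immediately from $\mathcal{C}(S)=\max_{i\in\mathcal{M}}\mathcal{C}_i(S)$, which was noted right after \dref{2}.

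The main delicate point is the legitimacy of the concentration step: $\mathcal{C}_i(S)$ is itself a stopping time that depends on the erasure pattern, so $E_i(S)$ is not a plain Binomial variable. I would address this by noting that $\mathcal{C}_i(S) \geq |\mathcal{W}_i(0)|$ grows with $N$, so that a standard Chernoff-type bound on the Bernoulli erasure process (together with the boundedness of $D_i(S)\leq \mathcal{C}_i(S)$) makes the relative error $o(1)$ as $N\to\infty$, which is exactly the sense in which the theorem's ``$\approx$'' should be read. Everything else is algebra.
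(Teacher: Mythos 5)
Your proposal is correct and follows essentially the same route as the paper's Appendix A: the same three-way partition of the recovery window into erasures, delay-causing receptions, and instantly decodable receptions, the same bookkeeping identity $\mathcal{C}_i(S) = |\mathcal{W}_i(0)| + D_i(S) + E_i(S)$ with the last slot forced to be an instantly decodable reception, and the same law-of-large-numbers substitution $E_i(S) \approx p_i(\mathcal{C}_i(S)-1)$ followed by solving for $\mathcal{C}_i(S)$ and taking the maximum over users. Your closing remark about $\mathcal{C}_i(S)$ being a stopping time is actually a point the paper glosses over, so your version is, if anything, slightly more careful.
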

\begin{proof}
The proof can be found in Appendix A.
\end{proof}
In the rest of the paper, we will use the approximation with equality as it indeed holds for large $N$.
We can thus formulate the minimum completion time problem as finding the schedule of coded packet $S^*$, such that:
\begin{align}\label{eq:opt}
S^* &= \arg\min_{S\in\mathcal{S}} \left\{\mathcal{C}(S)\right\} \nonumber \\ &= \arg\min_{S\in\mathcal{S}}\left\{ \max_{i\in\mathcal{M}}\left\{ \frac{|\mathcal{W}_i(0)| + D_i(S)-p_i}{1-p_i} \right\}\right\}\;,
\end{align}
where $\mathcal{S}$ is the set of all possible transmission schedules of coded packets.

Clearly, finding this optimal schedule at time $t=0$ through the above optimization formulation is very difficult. This is true due to the dynamic nature of erasures and the dependence of the optimal schedule of their effect, which makes the above equations anti-causal (i.e. current result depends on input from the future). Moreover, we know from the literature that optimizing the completion time over the whole \emph{recovery phase} is intractable\cite{refsameh}, even for the erasure-free scenario \cite{nada}. On the other hand, this formulation shows that the only terms affected by the schedule in the individual and overall completion time expressions are the decoding delay terms of the different users. Consequently, controlling such decoding delays in a smart way throughout the selection of the coded packet schedule can indeed affect the reduction of the completion time significantly. We will thus design a new heuristic algorithm in the next section that takes this fact into consideration.

\section{Design of Heuristic Algorithm}\label{sec:algo}

\subsection{Critical Criterion}
From \eref{eq:opt}, we can see that the optimal schedule is the one that achieves the minimum overall growth in the individual completion time expressions in \eref{eq:ICT}, $\forall~i\in\mathcal{M}$. Since we know that finding such schedule for the entire \textit{recovery phase}, prior to its start, is intractable, we will design our heuristic algorithm such that, in each transmission a time $t>0$, it minimizes the probability of increase of the maximum of such expressions over all users compared to their state before this transmission. To formally express this criterion, let us first define $D_i(t)$ as the individual experienced decoding delay of user $i$ until time $t$. Also, define $\mathcal{C}_i(t)$ as:
 \begin{equation} \label{eq:Ct}
 \mathcal{C}_i(t) = \frac{|\mathcal{W}_i(0)| + D_i(t) - p_i}{1-p_i}
 \end{equation}
In other words, $C_i(t)$ is the anticipated individual completion time of user $i$ if it experiences no further decoding delay increments starting from time $t$. Thus, the philosophy of our proposed heuristic algorithm is to transmit the coded packet $\kappa(t)$ at time $t$ such that:
\begin{equation}\label{eq:heuristic-criterion}
\kappa^{*}(t) = \arg\min_{\kappa(t)\in\mathcal{G}(t)} \left\{\mathds{P}\left(\max_{i\in\mathcal{M}} \left\{\mathcal{C}_i(t)\right\} > \max_{i\in\mathcal{M}} \left\{\mathcal{C}_i(t-1)\right\}\right) \right\}
\end{equation}
We will refer to \eref{eq:heuristic-criterion} as the critical criterion. Let $\mathcal{P}(t)$ be the set of users that can potentially increase $\max_{i\in\mathcal{M}} \left\{\mathcal{C}_i(t)\right\}$ at time $t$ compared to $\max_{i\in\mathcal{M}} \left\{\mathcal{C}_i(t-1)\right\}$ if they are not targeted by $\kappa(t)$. The set can be mathematically defined as follows:
\begin{align}\label{eq:critical-set}
\mathcal{P}(t) = \Bigg\{i\in\mathcal{M} \;\Biggm|\; &\frac{|\mathcal{W}_i(0)| + \left(D_i(t-1)+1\right) - p_i}{1-p_i} \nonumber\\
 & \qquad >  \frac{|\mathcal{W}_j(0)| + D_j(t-1)-p_j}{1-p_j}\Bigg\}\;,
\end{align}
where $j =   \arg\max_{k \in \mathcal{M}} \left\{ \frac{|\mathcal{W}_k(0)| + D_k(t-1) - p_k}{1-p_k}\right\}$. We will refer to this set as the ``highly critical set''. Also, define $\tau(\kappa(t))$ as the set of users that are targeted by the transmission $\kappa(t)$. The following theorem defines a maximum weight clique algorithm that can satisfy the critical criterion.
\begin{theorem}\label{th:critical-criterion}
The critical criterion in \eref{eq:heuristic-criterion} can be achieved by selecting $\kappa^*(t)$ according to the following optimization problem:
\begin{align}\label{eq:criterion-optimization}
\kappa^{*}(t) &= \arg\max_{\kappa(t) \in \mathcal{G}(t)} \left\{\sum_{i \in \mathcal{P}(t) \cap \tau(\kappa(t))} \log\left(\cfrac{1}{p_i}\right)\right\}.
\end{align}
In other words, the transmission $\kappa(t)$ that can satisfy the critical criterion can be selected using a maximum weight clique problem in which the weight of each vertex $v_{ij}$ in $\mathcal{P}(t)$ can be expressed as:
\begin{align}\label{eq:weights}
w_{ij}^* = \log\left(\frac{1}{p_i}\right)= -\log(p_i).
\end{align}
\end{theorem}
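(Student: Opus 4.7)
The plan is to convert the probabilistic criterion in \eref{eq:heuristic-criterion} into a deterministic combinatorial problem by analyzing, user by user, the chance that a single transmission raises the running maximum $\max_i \mathcal{C}_i$. The argument proceeds in three short steps, after which the equivalence with a maximum-weight clique formulation on $\mathcal{G}(t)$ is immediate.

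First I would pin down exactly which users can trigger the event inside \eref{eq:heuristic-criterion}. From \eref{eq:Ct}, the only way $\mathcal{C}_i(t)$ can move between times $t-1$ and $t$ is via a one-unit jump in $D_i$, which translates into an additive jump of $1/(1-p_i)$ in $\mathcal{C}_i$. By the defining inequality of \eref{eq:critical-set}, such a jump strictly exceeds $\max_{k}\mathcal{C}_k(t-1)$ exactly when $i \in \mathcal{P}(t)$; for $i \notin \mathcal{P}(t)$ the negated inequality shows that no delay increment on $i$ can raise the max. The event in \eref{eq:heuristic-criterion} is therefore the union, over $i \in \mathcal{P}(t)$, of the single-user events ``user $i$ incurs a decoding delay at transmission $t$.''

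Second I would compute, conditional on $\kappa(t)$, the probability of each of these single-user events. The three-way packet classification in \sref{sec:sys} gives: for $i \in \tau(\kappa(t))$ the packet is instantly decodable and no delay is incurred; for $i \in \mathcal{M}\setminus\tau(\kappa(t))$ with non-empty Wants set the packet is either non-innovative or non-instantly decodable, so $i$ incurs a delay exactly when its channel does not erase it, with probability $1-p_i$. Using independence of the $M$ erasure channels, the probability of the complementary event ``no user in $\mathcal{P}(t)$ suffers a delay'' factorizes as $\prod_{i \in \mathcal{P}(t)\setminus\tau(\kappa(t))} p_i$, so minimizing \eref{eq:heuristic-criterion} is equivalent to maximizing this product over all cliques $\kappa(t) \in \mathcal{G}(t)$.

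Third I would linearize by logarithms and peel off a constant. The identity
\begin{equation*}
\sum_{i \in \mathcal{P}(t)\setminus\tau(\kappa(t))} \log p_i = \sum_{i\in\mathcal{P}(t)} \log p_i - \sum_{i\in\mathcal{P}(t)\cap\tau(\kappa(t))} \log p_i
\end{equation*}
shows that the first summand on the right is constant in $\kappa(t)$ and can be dropped, reducing the problem to maximizing $\sum_{i \in \mathcal{P}(t) \cap \tau(\kappa(t))} \log(1/p_i)$, which is exactly \eref{eq:criterion-optimization}. Because the IDNC adjacency rule forbids two vertices of the same user from sharing a clique (it would require $j \in \mathcal{H}_i$ with $j \in \mathcal{W}_i$), the objective is a clean sum over clique vertices, so the problem is a standard maximum-weight clique instance with weight $-\log p_i$ on every $v_{ij}$ with $i\in\mathcal{P}(t)$, as in \eref{eq:weights}. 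The step I expect to require the most care is the first: verifying that users outside $\mathcal{P}(t)$ genuinely cannot trigger the event, which is just the negation of \eref{eq:critical-set} applied after the hypothetical delay increment; a small edge case worth noting is when $i$ coincides with the current argmax $j$, in which case a direct check shows $j\in\mathcal{P}(t)$ always, since adding $1/(1-p_j)>0$ to $\mathcal{C}_j(t-1)$ strictly raises it, so $\mathcal{P}(t)$ is non-empty whenever some user remains unsatisfied and the heuristic is well defined.
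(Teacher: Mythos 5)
Your proof is correct and follows essentially the same route as the paper's: identify $\mathcal{P}(t)$ as exactly the set of users whose single decoding-delay increment can raise $\max_{i}\mathcal{C}_i$, factor the no-increase probability across users by channel independence, and linearize with logarithms to obtain a maximum-weight clique problem. Your write-up is in fact more self-contained than the paper's, which stops at the product $\prod_{i\in\mathcal{P}(t)}\mathds{P}\left(D_i(t)-D_i(t-1)=0\right)$ and defers the remaining steps (evaluating each factor as $p_i$ for non-targeted users and $1$ for targeted ones, then dropping the constant $\sum_{i\in\mathcal{P}(t)}\log p_i$) to the cited reference \cite{vtc}, whereas you carry them out explicitly and also justify the minor but necessary fact that a clique contains at most one vertex per user.
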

\begin{proof}
Users $j \in \mathcal{M} \setminus \mathcal{P}(t)$ are unable to increase $\max_{i\in\mathcal{M}}\left\{\mathcal{C}_i(t)\right\}$ compared to $\max_{i\in\mathcal{M}}\left\{\mathcal{C}_i(t-1)\right\}$ with probability 1, even if they experience a decoding delay. This is true since the set $\mathcal{P}(t)$ is constructed such that it contains all users that have non-zero probabilities of increasing the completion time. According the definition of $\mathcal{C}_i(t)$ in \eref{eq:Ct}, $\forall~i\in\mathcal{M}$, users $i \in \mathcal{P}(t)$ will not increase $\max_{i\in\mathcal{M}}\left\{\mathcal{C}_i(t)\right\}$ after the transmission $\kappa(t)$ only if they do not experience a decoding delay increment in this transmission. Consequently, we get:
\begin{align}
& \mathds{P}\left(\max_{i\in\mathcal{M}}\left\{\mathcal{C}_i(t)\right\} = \max_{i\in\mathcal{M}}\left\{\mathcal{C}_i(t-1)\right\}\right) \nonumber \\
& \qquad = \mathds{P}\left(\max_{i\in\mathcal{P}(t)}\left\{\mathcal{C}_i(t)\right\} = \max_{i\in\mathcal{M}}\left\{\mathcal{C}_i(t-1)\right\}\right) \nonumber \\
& \qquad = \mathds{P} \left(D_i(t)-D_i(t-1)=0, \forall~i\in\mathcal{P}(t)\right) \nonumber \\
& \qquad = \prod_{i \in \mathcal{P}(t)} \mathds{P} \left(D_i(t)-D_i(t-1)=0\right).
\end{align}
According to the analysis done in \cite{vtc}, the critical criterion in \eref{eq:heuristic-criterion} can be achieved by selecting $\kappa^*(t)$ according to the following optimization problem:
\begin{align}
\kappa^{*}(t) &= \arg\max_{\kappa(t) \in \mathcal{G}(t)} \left\{\sum_{i \in \mathcal{P}(t) \cap \tau(\kappa(t))} \log\left(\cfrac{1}{p_i}\right)\right\}.
\end{align}
\end{proof}

\subsection{Proposed Heuristic Algorithm} 
Despite the importance of the satisfaction of the critical criterion in order to minimize the probability of increase of the maximum individual completion time, it may not fully exploit the power of IDNC. In other words, once a clique is chosen according to \eref{eq:criterion-optimization} from the users in the highly critical set $\mathcal{P}(t)$, there may exist vertices belonging to other users that can form an even bigger clique. Thus, adding this vertex to the clique and serving these users will benefit them without affecting the IDNC constraint for the users belonging to $\mathcal{P}(t)$.  

To schedule such vertices and their users, we will use the multi-layer graph selection introduced in \cite{vtc} with modified layers. Let $\mathcal{G}_1,\mathcal{G}_2,...\mathcal{G}_h$ (with $h \in \mathds{N}$) be the sets of vertices of $\mathcal{G}(t)$, such that $v_{ik} \in \mathcal{G}_n$ if the following conditions are true:
\begin{itemize}
\item $\mathcal{C}_i(t-1)+ \cfrac{n}{1-p_i} > \mathcal{C}_j(t-1)$.
\item $\mathcal{C}_i(t-1)+ \cfrac{n-1}{1-p_i} \leq \mathcal{C}_j(t-1)$.
\end{itemize}
where $j =   \underset{ i \in \mathcal{M}}{\text{argmax }} \left\{ \mathcal{C}_i(t-1)\right\}$. Consequently, the IDNC graph at time $t$ is partitioned into $h$ layers with descending order of criticality. By examining the above condition, the vertices of the users of $\mathcal{P}(t)$ are all in layer $\mathcal{G}_1$. Moreover, the $n$-th layer of the graph includes the vertices of the users who may eventually increase $\max_{i\in\mathcal{M}}\left\{\mathcal{C}_i(t+n)\right\}$ if they experience $n$ decoding delay increments in the subsequent $n$ transmissions. Consequently, a user with vertices belonging to $\mathcal{G}_i$ is more critical than another with vertices belonging to $\mathcal{G}_j$, $j>i$, as the former has a higher chance to increase the overall completion time.

In order to guarantee the satisfaction of the critical criterion, the algorithm \cite{vtc} first finds the maximum weight clique $\kappa^*$ in layer $\mathcal{G}_1$ as mandated by Theorem \ref{th:critical-criterion}. We then construct $\mathcal{G}_2(\kappa^*)$ including each vertex in $\mathcal{G}_2$ that is adjacent to all vertices in $\kappa^*$ (i.e. forms a bigger clique with $\kappa^*$). After assigning the same weights defined in \eref{eq:weights}, the maximal weight clique in $\mathcal{G}_2(\kappa^*)$ is found and added to $\kappa^*$. This process is repeated for each layer $\mathcal{G}_i, i\leq h$ of the graph to find the selected maximal weight clique $\kappa^*\in\mathcal{G}(t)$ to be transmitted at time $t$.

\section{Simulation Results} \label{sec:results}

\begin{figure}[t]
\centering
  \includegraphics[width=1\linewidth]{./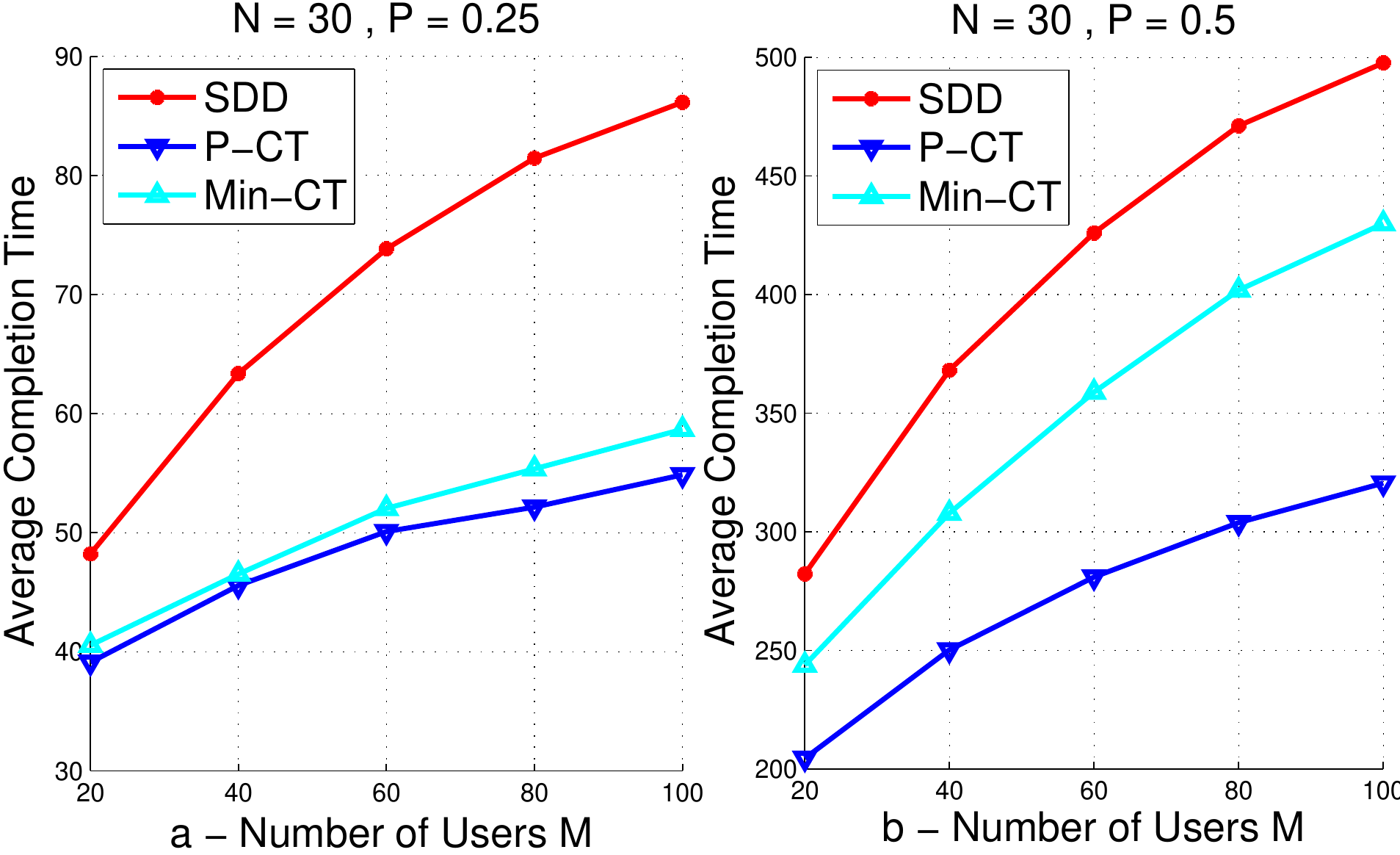}\\
  \caption{Mean completion time for IDNC versus number of users $M$.}\label{fig:M}
\end{figure}

\begin{figure}[t]
\centering
  \includegraphics[width=1\linewidth]{./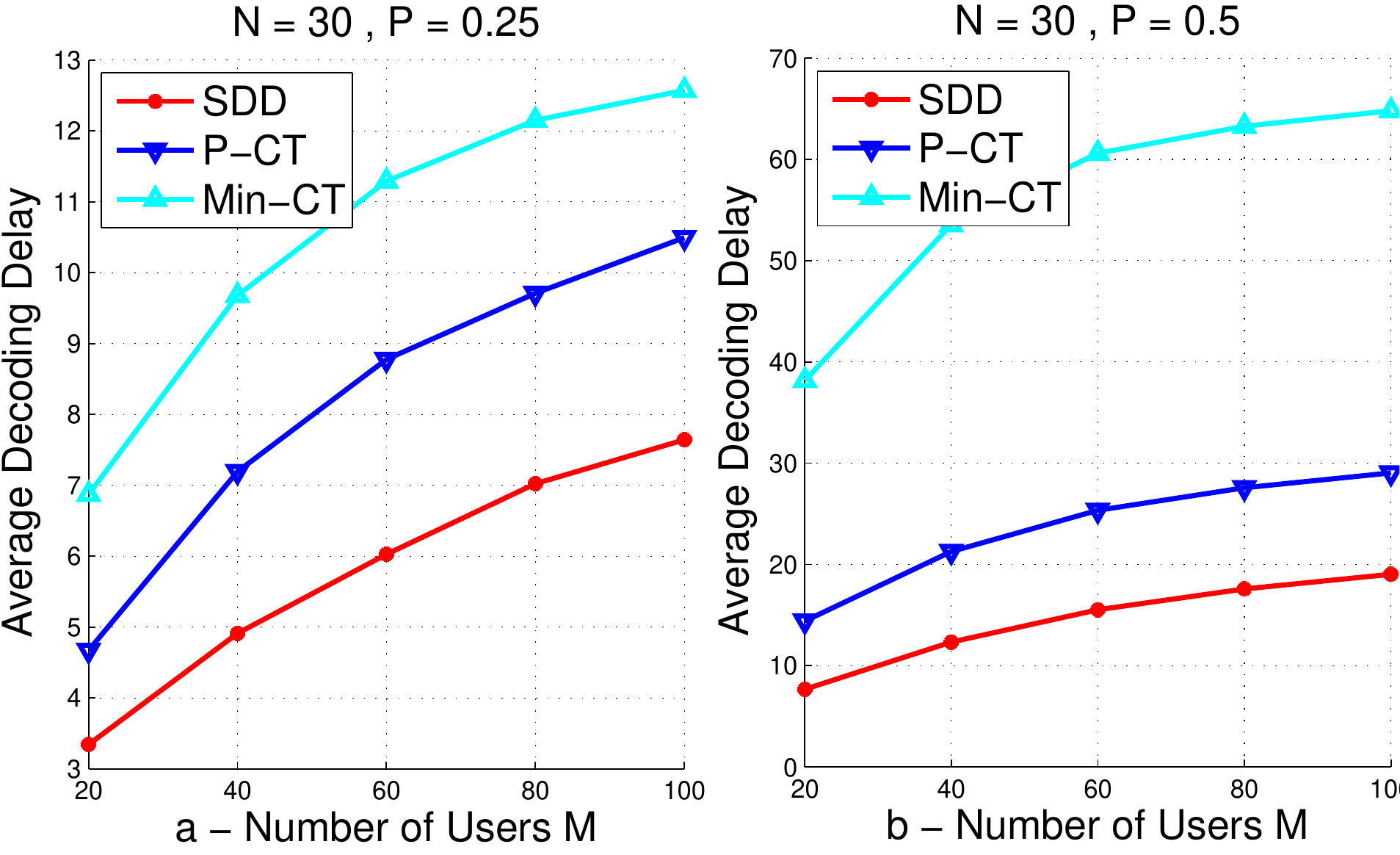}\\
  \caption{Mean decoding delay for IDNC versus number of users $M$.}\label{fig:M2}
\end{figure}

\begin{figure}[t]
\centering
  \includegraphics[width=1\linewidth]{./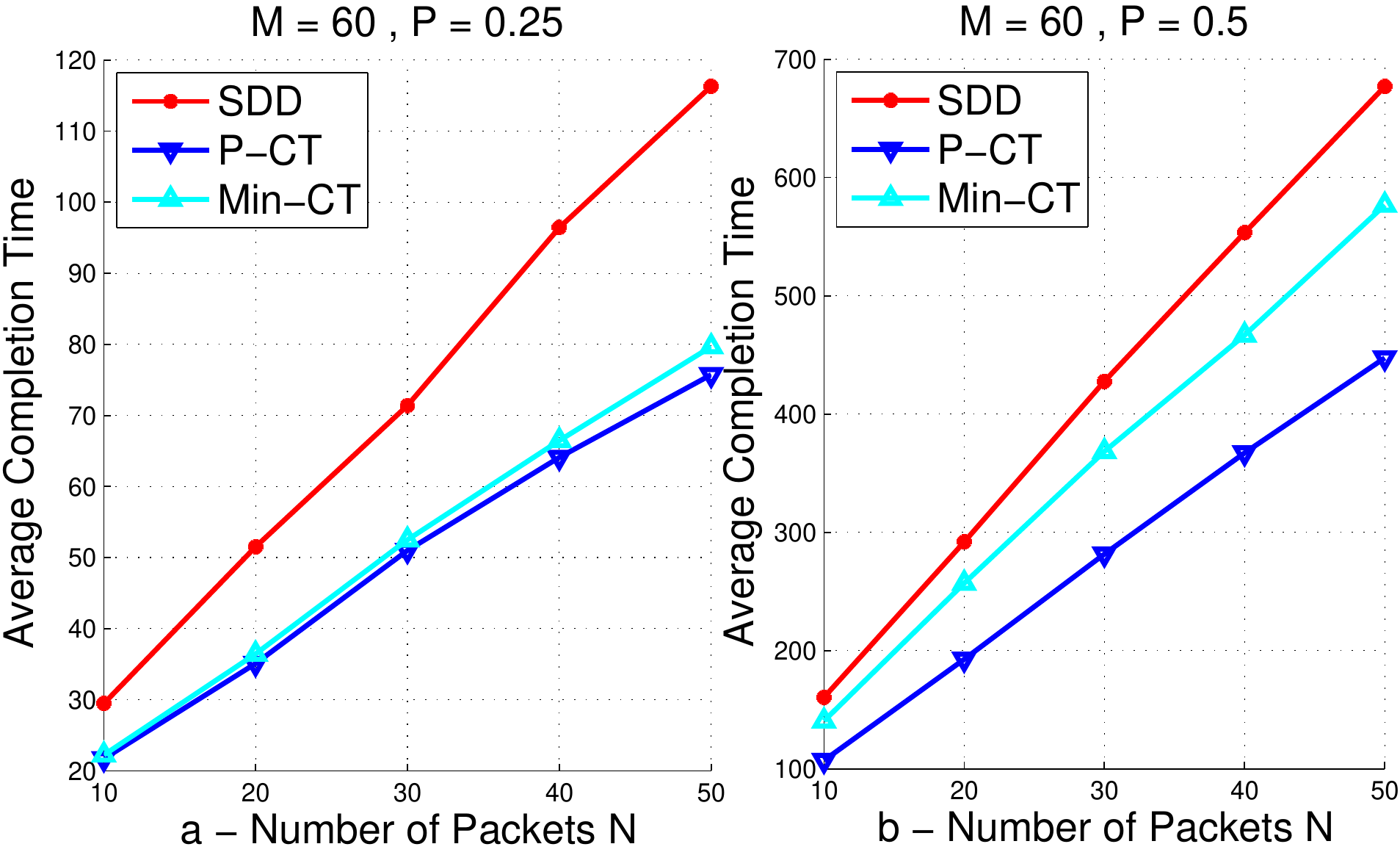}\\
  \caption{Mean completion time for IDNC versus number of packets $N$.}\label{fig:N}
\end{figure}

\begin{figure}[t]
\centering
  \includegraphics[width=1\linewidth]{./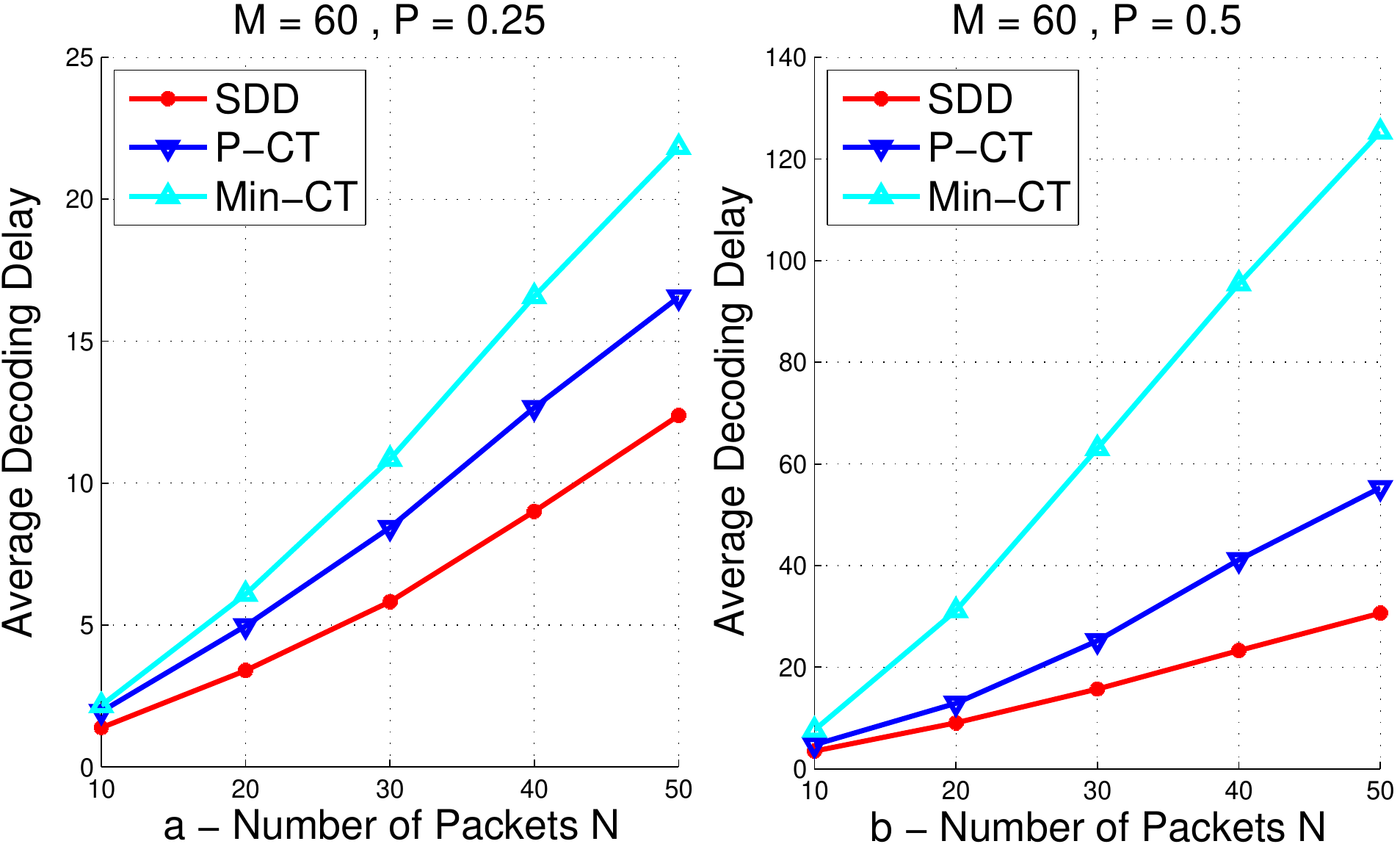}\\
  \caption{Mean decoding delay for IDNC versus number of packets $N$.}\label{fig:N2}
\end{figure}

\begin{figure}[t]
\centering
  \includegraphics[width=1\linewidth]{./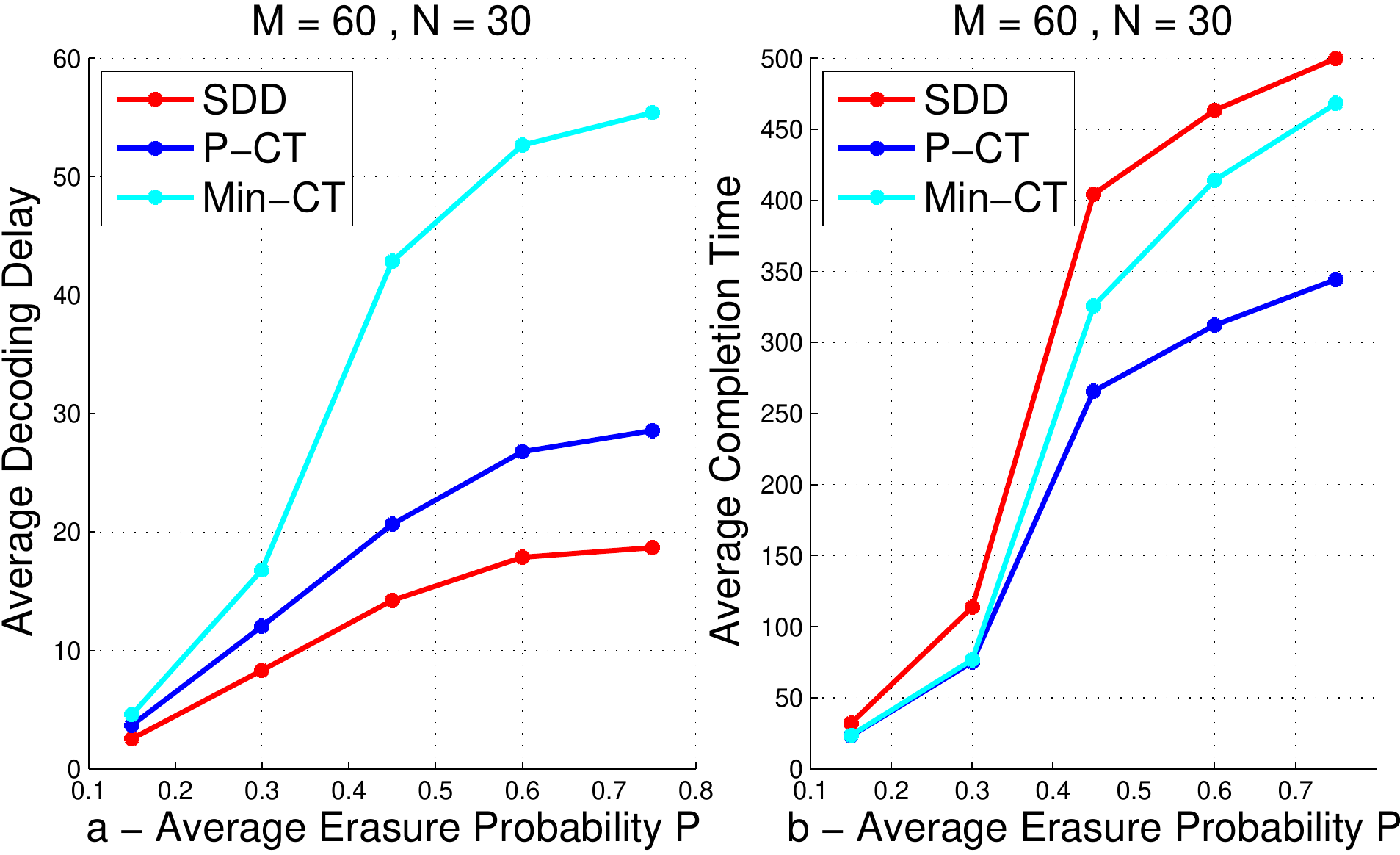}\\
  \caption{Mean delays for IDNC versus packet erasure probability $P$.}\label{fig:P}
\end{figure}

In this section, we present the simulation results comparing the different delays aspects achieved by the different policies to optimize each. We compare, through extensive simulations the sum decoding delay (denoted by SDD) and the completion time achieved by \cite{ref4} (denoted by Min-CT) and the completion time achieved by our algorithm (denoted by P-CT) while using the policy to reduce the sum decoding delay \cite{ref2} and the policy \cite{ref4} and our policy to reduce the completion time.

In all the simulations, the different delays are computed by frame then averaged over a large number of iterations. We assume that the packet erasure probability of all the users change from frame to frame while the average packet erasure probability $P$ remain constant.

\fref{fig:M} depicts the comparison of the mean completion time achieved by the policy to reduce the sum decoding delay (SDD), \cite{ref4} policy and our one to reduce the completion time (Min-CT and P-CT) against $M$ for $N=60$ and $P=0.25$ and $P=0.5$ receptively. \fref{fig:M2} illustrates the comparison of the decoding delay for the same inputs. \fref{fig:N} and \fref{fig:N2} depicts the comparison of the aforementioned delay aspects against $N$ for $M=60$ and $P=0.25$ and $P=0.5$ receptively and \fref{fig:P} illustrates this comparison against the erasure probability $P$ for $M=60$ and $N=30$.

From all the figures, we can clearly see that our proposed completion time algorithm outperforms the completion time policy proposed in \cite{ref4}. Moreover it gives the best agreement among the sum decoding delay and the completion in IDNC. 

\fref{fig:M}.a and \fref{fig:N}.a depicts the completion time when applying the sum decoding delay policy, the completion time policy \cite{ref4} and the our completion time policy against $M$ and $N$ for a low packet erasure probability. We see that the performance of P-CT and Min-CT are very close. Whereas in \fref{fig:M2} and \fref{fig:N2} where the sum decoding delay is computed for the same inputs, the performance of P-CT is much better than Min-CT one.

As the channel conditions become harsher (high packet erasure probability), our policy to reduce the completion time minimize the completion time better than the Min-CT. We can see from \fref{fig:M}.b, \fref{fig:M2}.b, \fref{fig:N}.b and \fref{fig:N2}.b that P-CT outperforms Min-CT in minimizing both the sum decoding delay and the completion time. \fref{fig:P}.a shows that for $P>0.3$, P-CT achieves a significant improvement in the completion time. This can be explained by the light of the P-CT policy characteristics. In the P-CT policy, the number of the erased packets is estimated using the law of large numbers. This approximation can be effective when the erasure of the channel or the input (number of packets and users) are  high enough.

\section{Conclusion} \label{sec:conclusion}

In this paper, we studied the effect of controlling the decoding delay to reduce the completion time below its currently best known solution. We first derived the decoding-delay-dependent completion time expressions. We then employed a heuristic that decides on coded packets by reducing the probability of decoding delay increase on a new layering of the IDNC graph based on user criticality in increasing the overall completion time. Simulation results showed that this new algorithm achieves a lower mean completion time and mean decoding delay compared to the best known completion time heuristics, with significant gains in harsh erasure scenarios.

\appendices

\section{Proof of Theorem 1}

Let us first define $\mathcal{E}_i(t)$ as the cumulative number of transmitted packets from the sender that were erased at user $i$ until time $t$. It is easy to infer that the reception completion event at time $t=C_i(S)$ of a user $i$ will occur when it receives an instantly decodable packet in the $C_i(S)$-th recovery transmission from the sender. Consequently, $\forall~t<=\mathcal{C}_i(S)-1$, the transmission at time $t$ following the schedule $S$ can be one of the following options:
\begin{itemize}
\item The packet can be erased at user $i$ $\Rightarrow$ The transmission will increase $\mathcal{E}_i(t)$ $\left(\mbox{i.e. } \mathcal{E}_i(t)=\mathcal{E}_i(t-1)+1\right)$.
\item The packet can be successfully received by the user $\Rightarrow$ Two cases can occur types:
\begin{itemize}
\item The packet is instantly decodable for user $i$. Note that user $i$ needs to receive $|\mathcal{W}(0)|-1$ of those packets until time $t=C_i(S)-1$ in order to complete its reception by the last missing source packet from the transmitted packet at time $t=C_i(S)$. Consequently, the number of such packets received by user $i$ until time $t=C_i(S)$ is equal to $|\mathcal{W}(0)|$.
\item The packet is either non-innovative or non instantly decodable $\Rightarrow$ This will increase the value of $D_i(S)$ by one each time it occurs until the reception completion for this user.
\end{itemize}
\end{itemize}

Consequently, the number of recovery transmission sent by the sender following schedule $S$ until user $i$ complete its reception of the frame packets (i.e. completion time of user $i$) can be expressed as follows:
\begin{align}\label{cti}
\mathcal{C}_i(S) = |\mathcal{W}_i(0)| + D_i(S) + \mathcal{E}_i(\mathcal{C}_i(S)-1)\;.
\end{align}

Let $\mathcal{X}_i(t)$ be a Bernoulli random variable that takes the value $1$ if the transmission at time $t$ is erased at user $i$. The definition of the variable is the following:
\begin{align}
\mathds{P}(\mathcal{X}_i(t) = x) =
\begin{cases}
p_i \hspace{0.9 cm}& \text{if } x = 1 \\
1-p_i \hspace{0.9 cm}& \text{if } x = 0 \\
\end{cases}
\end{align}

Consequently, the number of erased packets $\mathcal{E}_i(\mathcal{C}_i(S)-1)$ at user $i$ until $t=C_i(S)-1$ is therefore the sum of these $\mathcal{C}_i(S)-1$ Bernoulli trials. In other words,
\begin{equation}
\mathcal{E}_i(\mathcal{C}_i(S)-1)= \sum\limits_{t=1}^{\mathcal{C}_i(S)-1}\mathcal{X}_i(t)
\end{equation}
For large enough frame size $N$, the completion time $C_i(S)$ would also be large enough and thus $\mathcal{E}_i(\mathcal{C}_i(S)-1)$ can be approximated using the law of large numbers as follows:
\begin{align}
\mathcal{E}_i(\mathcal{C}_i(S)-1) \approx p_i(\mathcal{C}_i(S)-1).
\end{align}
Substituting the previous expression in \eref{cti} and re-arranging the terms, the completion time for user $i$ can be finally expressed as:
\begin{align}
\mathcal{C}_i(S) \approx \cfrac{|\mathcal{W}_i(0)| + D_i(S) - p_i}{1-p_i}.
\end{align}

Thus, the expression for the overall completion time can be expressed as:
\begin{align}
\mathcal{C}(S) \approx \max_{i\in\mathcal{M}}\left\{\frac{\left|\mathcal{W}_i(0)\right| + D_i(S)-p_i}{1-p_i}\right\}
\end{align}

\bibliographystyle{IEEEtran}
\bibliography{references}

\begin{thebibliography}{10}
\providecommand{\url}[1]{#1}
\csname url@samestyle\endcsname
\providecommand{\newblock}{\relax}
\providecommand{\bibinfo}[2]{#2}
\providecommand{\BIBentrySTDinterwordspacing}{\spaceskip=0pt\relax}
\providecommand{\BIBentryALTinterwordstretchfactor}{4}
\providecommand{\BIBentryALTinterwordspacing}{\spaceskip=\fontdimen2\font plus
\BIBentryALTinterwordstretchfactor\fontdimen3\font minus
  \fontdimen4\font\relax}
\providecommand{\BIBforeignlanguage}[2]{{%
\expandafter\ifx\csname l@#1\endcsname\relax
\typeout{** WARNING: IEEEtran.bst: No hyphenation pattern has been}%
\typeout{** loaded for the language `#1'. Using the pattern for}%
\typeout{** the default language instead.}%
\else
\language=\csname l@#1\endcsname
\fi
#2}}
\providecommand{\BIBdecl}{\relax}
\BIBdecl

\bibitem{850663}
R.~Ahlswede, N.~Cai, S.-Y. Li, and R.~Yeung, ``Network information flow,''
  \emph{IEEE Transactions on Information Theory}, vol.~46, no.~4, pp.
  1204--1216, 2000.

\bibitem{6655395}
M.~Muhammad, M.~Berioli, G.~Liva, and G.~Giambene, ``Instantly decodable
  network coding protocols with unequal error protection,'' in \emph{Proc. of
  IEEE International Conference on Communications (ICC' 2013),Sydney,
  Australia}, June 2013, pp. 5120--5125.

\bibitem{ref4}
S.~Sorour and S.~Valaee, ``On minimizing broadcast completion delay for
  instantly decodable network coding,'' in \emph{Proc. of IEEE International
  Conference on Communications, (ICC' 2010), Cape Town, South Africa}, May,
  2010, pp. 1--5.

\bibitem{6725590}
Y.~Liu and C.~W. Sung, ``Quality-aware instantly decodable network coding,''
  \emph{IEEE Transactions on Wireless Communications}, vol.~13, no.~3, pp.
  1604--1615, March 2014.

\bibitem{ref5}
P.~Sadeghi, D.~Traskov, and R.~Koetter, ``Adaptive network coding for broadcast
  channels,'' in \emph{Workshop on Network Coding, Theory, and Applications,
  (NetCod' 2009), Lausanne, Switzerland}, June 2009.

\bibitem{ref7}
J.~Sundararajan, D.~Shah, and M.~Medard, ``Online network coding for optimal
  throughput and delay - the three-receiver case,'' in \emph{Proc. of IEEE
  International Symposium on Information Theory and Its Applications, (ISITA'
  2008), Auckland, New Zealand}, Dec. 2008, pp. 1--6.

\bibitem{ref6}
P.~Sadeghi, R.~Shams, and D.~Traskov, ``An optimal adaptive network coding
  scheme for minimizing decoding delay in broadcast erasure channels,''
  \emph{EURASIP Journal on Wireless Communications and Networking}, vol. 2010,
  no.~1, 2010.

\bibitem{6766433}
S.~Tajbakhsh, P.~Sadeghi, and N.~Aboutorab, ``Instantly decodable network codes
  for cooperative index coding problem over general topologies,'' in
  \emph{Australian Communications Theory Workshop (AusCTW' 2014)}, Feb 2014,
  pp. 84--89.

\bibitem{ref8}
E.~Drinea, C.~Fragouli, and L.~Keller, ``Delay with network coding and
  feedback,'' in \emph{Proc. of IEEE International Symposium on Information
  Theory, (ISIT' 2009), Seoul, Korea}, July 2009, pp. 844--848.

\bibitem{ref17}
M.~Esmaeilzadeh and P.~Sadeghi, ``Optimizing completion delay in network coded
  systems over tdd erasure channels with memory,'' in \emph{Proc. of IEEE
  International Symposium on Communications and Information Technologies,
  (ISCIT' 2012), Queensland, Australia}, Oct. 2012, pp. 883--888.

\bibitem{ref18}
S.~Sorour and S.~Valaee, ``On densifying coding opportunities in instantly
  decodable network coding graphs,'' in \emph{Proc. of IEEE International
  Symposium on Information Theory Proceedings, (ISIT' 2012), Cambridge, MA,
  USA}, July 2012, pp. 2456--2460.

\bibitem{arg1}
A.~{Le}, A.~S. {Tehrani}, A.~G. {Dimakis}, and A.~{Markopoulou}, ``{Instantly
  decodable network codes for real-time applications},'' \emph{ArXiv e-prints},
  Mar. 2013.

\bibitem{refsameh}
S.~{Sorour}, A.~{Douik}, S.~{Valaee}, T.~Y. {Al-Naffouri}, and M.-S. {Alouini},
  ``{Partially blind instantly decodable network codes for lossy feedback
  environment},'' \emph{Accepted in IEEE Transactions on Wireless
  Communications, available ArXiv e-prints}.

\bibitem{6120247}
X.~Li, C.-C. Wang, and X.~Lin, ``Optimal immediately-decodable inter-session
  network coding (idnc) schemes for two unicast sessions with hard deadline
  constraints,'' in \emph{49th Annual Allerton Conference on Communication,
  Control, and Computing (Allerton' 2011),Monticello, IL, USA}, Sept 2011, pp.
  784--791.

\bibitem{refahmed}
A.~Douik, S.~Sorour, M.-S. Alouini, and T.~Y. Al-Naffouri, ``Delay reduction in
  lossy intermittent feedback for generalized instantly decodable network
  coding,'' in \emph{Proc. of IEEE 9th International Conference on Wireless and
  Mobile Computing, Networking and Communications, (WiMob' 2013), Lyon,
  France}, Oct. 2013.

\bibitem{ref2}
S.~Sorour and S.~Valaee, ``Minimum broadcast decoding delay for generalized
  instantly decodable network coding,'' in \emph{Proc. of IEEE Global
  Telecommunications Conference, (GLOBECOM' 2010), Miami, Florida, USA}, Dec.
  2010, pp. 1--5.

\bibitem{nada}
N.~{Aboutorab}, P.~{Sadeghi}, and S.~{Sorour}, ``{On Improving the Balance
  between the Completion Time and Decoding Delay in Instantly Decodable Network
  Coded Systems},'' \emph{ArXiv e-prints}, Nov. 2013.

\bibitem{vtc}
A.~Douik, S.~Sorour, M.-S. Alouini, and T.~Y. Al-Naffouri, ``On minimizing the
  maximum broadcast decoding delay for instantly decodable network coding,''
  \emph{Submitted to proc of IEEE Vehicular Technology Conference, (VTC'Fall
  2014), available ArXiv e-prints}, Apr, 2014.

\end{thebibliography}

\end{document}